\documentclass[pdflatex,sn-mathphys-num]{sn-jnl}

\usepackage{graphicx}
\usepackage{multirow}
\usepackage{amsmath,amssymb,amsfonts}
\usepackage{amsthm}
\usepackage{mathrsfs}
\usepackage[title]{appendix}
\usepackage{xcolor}
\usepackage{textcomp}
\usepackage{manyfoot}
\usepackage{booktabs}
\usepackage{algorithm}
\usepackage{algorithmicx}
\usepackage{algpseudocode}
\usepackage{listings}
\usepackage{braket}
\usepackage{amsfonts}
\usepackage{bm}

\theoremstyle{thmstyleone}
\newtheorem{lemma}{Lemma}
\newtheorem{proposition}{Proposition}

\newtheorem{corollary}{Corollary}
\newtheorem*{example*}{Example}
\newtheorem*{remark*}{Remark}
\newtheorem*{remarks*}{Remarks}

\newcommand{\dd}{\,\mathrm{d}}
\newcommand{\tr}{\,\mathrm{tr}}
\newcommand{\ee}{\,\mathrm{e}}

\begin{document}

\title[BKM ensemble]{Average entropy of Bogoliubov-Kubo-Mori random state ensemble}

\author{\fnm{Sohail}\footnotemark[1]~~\!\footnotemark[2]~~and \fnm{Lu} \sur{Wei}\footnotemark[2]
\footnotetext[1]{Institute for Quantum Studies, Chapman University, California 92866, USA}
\footnotetext[2]{Department of Computer Science, Texas Tech University, Texas 79409, USA}
\vspace{0.9cm}}

\abstract{Random states play a foundational role in different branches of modern quantum science. In this work, we study a recently proposed random state ensemble induced from von Neumann entropy through the Bogoliubov-Kubo-Mori (BKM) metric. In particular, we derive an exact yet explicit formula of average entanglement entropy over BKM ensemble. In obtaining the formula, we only make use of properties of normalization constant of the ensemble in the absence of its correlation kernel, contrary to average entropy computation of other ensembles. This new framework paves the way for calculating higher-order cumulants of BKM ensemble beyond the average.}

\keywords{Bogoliubov-Kubo-Mori metric; entropy-induced ensemble; random matrices; quantum entanglement; special functions; von Neumann entropy}

\maketitle

\section{Introduction and main results}\label{sec:main}
Crucial to successful exploitation of revolutionary advances of quantum science is the understanding of quantum entanglement. Entanglement is the physical phenomenon, the medium, and, most importantly, the resource underpinning quantum technologies. Mathematical description of entanglement is naturally formulated in terms of generic states, which are random states generated from information-geometric metrics. Random states find various applications in contemporary quantum science such as complexity of quantum circuits~\cite{Brandao21}, benchmark of quantum devices~\cite{Choi23}, and estimation of quantum entanglement~\cite{Page93}.

For an $m\times m$ density matrix $\rho$ of a bipartite system having the spectrum $0\leq\lambda_{m}<\dots<\lambda_{1}\leq1$ with the constraint
\begin{equation}\label{eq:rho}
\tr(\rho)=1,
\end{equation}
the degree of entanglement as measured by von Neumann entropy
\begin{equation}\label{eq:vN}
S=-\tr\left(\rho\ln\rho\right)=-\sum_{i=1}^{m}\lambda_{i}\ln\lambda_{i}
\end{equation}
has been a subject of intense study for several random state ensembles including Hilbert–Schmidt (HS) ensemble~\cite{Page93,Foong94,Ruiz95,VPO16,Wei17,Wei20,HWC21,HW25}, Bures–Hall (BH) ensemble~\cite{Sommers03,Forrester16,Kumar19,Wei20BHA,Wei20BH,WHW25,WHW26}, and fermionic Gaussian ensemble~\cite{Bianchi21,HW22,HW23}. The fact that different metrics induce those ensembles, while all utilizing the von Neumann entropy~(\ref{eq:vN}), thereby raises the question: what is the most natural random state ensemble in the space of density matrices when using von Neumann entropy as entanglement measure?

An answer to this question was recently proposed in~\cite{Miller25} by introducing what was termed as an entropy-based ensemble. Specifically, the von Neumann entropy~(\ref{eq:vN}) was shown to induce the Bogoliubov-Kubo-Mori (BKM) metric~\cite{Balian86,Balian14}
\begin{equation}
\dd^{2}S=-\sum_{i,j=1}^{m}\frac{\ln\lambda_{i}-\ln\lambda_{j}}{\lambda_{i}-\lambda_{j}}\dd\rho_{i,j}\dd\rho_{j,i}
\end{equation}
that gives rise to the BKM random state ensemble~\cite{Miller25}
\begin{equation}\label{eq:fBKM}
f(\lambda)=\frac{\Gamma(\beta)}{C(\alpha)}~\delta\left(1-\sum_{i=1}^{m}\lambda_{i}\right)\prod_{1\leq i<j\leq m}\left(\lambda_{i}-\lambda_{j}\right)\left(\ln\lambda_{i}-\ln\lambda_{j}\right)\prod_{i=1}^{m}\lambda_{i}^{\alpha},
\end{equation}
where
\begin{equation}\label{eq:b}
\beta=m\left(\alpha+\frac{m+1}{2}\right)
\end{equation}
and the normalization constant $C(\alpha)$ is shown in Lemma~\ref{l:C} to be
\begin{equation}\label{eq:C}
C(\alpha)=\Gamma^{m}(\alpha+1)\prod_{i=1}^{m}\Gamma(i).
\end{equation}
The Dirac delta function $\delta(\cdot)$ in~(\ref{eq:fBKM}) ensures that the ensemble is a proper density matrix satisfying~(\ref{eq:rho}). When von Neumann entropy is utilized as an entanglement measure, the BKM ensemble is therefore the most suitable choice than other known random state ensembles. Note that the BKM ensemble proposed in~\cite{Miller25} is in fact special case of~(\ref{eq:fBKM}) when $\alpha=-1/2$. Since generalization to the ensemble~(\ref{eq:fBKM}) of an arbitrary real $\alpha>-1$ is straightforward, we will refer to~(\ref{eq:fBKM}) as the BKM ensemble.

A first step to study a generic state ensemble is its the typical behavior of entanglement entropy over the ensemble. It was derived in~\cite{Miller25} the following asymptotic average entropy of the BKM ensemble~(\ref{eq:fBKM}),
\begin{equation}\label{eq:asy}
\mathbb{E}\left[S\right]=\ln m-\gamma-\ln2+\frac{1}{2}+\mathcal{O}\left(\frac{\ln m}{m}\right),
\end{equation}
where $\gamma\approx0.5772$ is the Euler's constant. The main result of this work, summarized in Corollary~\ref{main:m} below, is an exact yet explicit average entropy formula of the BKM ensemble. To obtain the result, we first compute the average of spectral moments
\begin{equation}\label{eq:Rk}
R_{l}=\sum_{i=1}^{m}x_{i}^{l}
\end{equation}
of the unconstrained BKM ensemble~(\ref{eq:BKM}) as presented in Proposition~\ref{main:k}. Instead of direct calculation, the route of computing average entropy through its spectral moments has recently been carried out over different ensembles~\cite{HW25,WHW25}. This new route is promising in efficiently finding higher-order moments of entropy beyond average values.
\begin{proposition}\label{main:k}
The average spectral moments~(\ref{eq:Rk}) of the unconstrained BKM ensemble~(\ref{eq:BKM}) valid for $l\geq0$ is given by
\begin{equation}\label{eq:k}
\mathbb{E}\left[R_{l}\right]=\sum_{i=0}^{m-1}\binom{m}{i+1}\frac{l^{i}}{i!}\left(\frac{\Gamma(\alpha+l+1)}{\Gamma(\alpha+1)}\right)^{(i)},
\end{equation}
where $\binom{m}{i+1}$ is the binomial coefficient.
\end{proposition}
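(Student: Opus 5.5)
The plan is to avoid the correlation kernel entirely. I would deform the ensemble into one with distinct exponents, where both the normalization and the moment insertion reduce to Vandermonde-type products, and then take a confluent limit. I take the unconstrained ensemble~(\ref{eq:BKM}) to be the density proportional to $\prod_{i<j}(x_i-x_j)(\ln x_i-\ln x_j)\prod_i x_i^{\alpha}\ee^{-x_i}$ on $(0,\infty)^m$, normalized by $C(\alpha)$ of Lemma~\ref{l:C}.

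\emph{Step 1: deform the ensemble.} Write $\prod_{i<j}(\ln x_i-\ln x_j)$, up to sign, as $\prod_{j}(j-1)!\,\det[(\ln x_i)^{j-1}/(j-1)!]$. Note that $\det[x_i^{a_j}]/\Delta(a)$, with $\Delta(a)=\prod_{i<j}(a_j-a_i)$, tends to $\det[(\ln x_i)^{j-1}/(j-1)!]$ as $a_1,\dots,a_m\to\alpha$. So I introduce distinct exponents $a_1,\dots,a_m$ and work with the weight $\det[x_i^{j-1}]\det[x_i^{a_j}]\prod_i\ee^{-x_i}$. By Andreief's identity its normalization is $m!\det[\Gamma(a_j+i)]_{i,j=1}^m$.

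Since $\Gamma(a+i)/\Gamma(a+1)=(a+1)_{i-1}$ is a monic polynomial of degree $i-1$ in $a$, this determinant equals $\prod_j\Gamma(a_j+1)\,\Delta(a)$. This is the "normalization property" that replaces the kernel. Dividing by $\Delta(a)$ and letting $a\to\alpha$ recovers~(\ref{eq:C}), consistent with Lemma~\ref{l:C}.

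\emph{Step 2: insert the moment.} Inserting $R_l=\sum_k x_k^l$ amounts to differentiating at $\varepsilon=0$ the normalization with weight $(1+\varepsilon x^l)$. Via Andreief this is a sum over $j$ of determinants in which column $j$ has $a_j$ replaced by $a_j+l$. By Step~1 each such determinant is again a Gamma product times a Vandermonde, with $a_j\mapsto a_j+l$. Taking the ratio to the normalization gives, for distinct $a$,
\begin{equation*}
\mathbb{E}_a[R_l]=\sum_{j=1}^{m}h(a_j)\prod_{k\neq j}\frac{a_k-a_j-l}{a_k-a_j},\qquad h(z)=\frac{\Gamma(z+l+1)}{\Gamma(z+1)}.
\end{equation*}
The true average $\mathbb{E}[R_l]$ is the limit of this expression as $a\to\alpha$. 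The $\Delta(a)$ factors and the factorials cancel between numerator and denominator, so the limit is legitimate.

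\emph{Step 3: confluent limit.} This is the step I expect to be the main obstacle, since the individual terms blow up and only the sum stays finite. I would handle it with a contour integral. The sum above equals
\begin{equation*}
-\frac{1}{l}\,\frac{1}{2\pi\mathrm{i}}\oint h(z)\prod_{k=1}^{m}\frac{a_k-z-l}{a_k-z}\dd z,
\end{equation*}
taken over a small contour enclosing all $a_k$ and avoiding the poles of $h$. One checks that the residue at $z=a_j$ reproduces the $j$-th term, being careful with the sign and the factor $l$.

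The integrand is continuous in $a$ on the contour, so one may set $a_k=\alpha$ directly. Then $\prod_k\frac{\alpha-z-l}{\alpha-z}=\left(1+\frac{l}{z-\alpha}\right)^m=\sum_{i=0}^{m}\binom{m}{i}\frac{l^{i}}{(z-\alpha)^{i}}$. The $i=0$ term contributes no residue. The term with $(z-\alpha)^{-(i+1)}$ contributes $\binom{m}{i+1}l^{i+1}h^{(i)}(\alpha)/i!$. After accounting for the prefactor $1/l$ and the sign, this yields exactly~(\ref{eq:k}).

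The case $l=0$ gives $m$ directly, and the formula holds for all real $l\geq0$ because $h$ is analytic near $\alpha>-1$. The remaining care is in verifying the sign conventions of the two determinants and the residue orientation. A useful check is $m=1$, where the formula must reduce to $\Gamma(\alpha+l+1)/\Gamma(\alpha+1)$.
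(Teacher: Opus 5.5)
Your proof is correct apart from a sign slip, and it takes a genuinely different route from the paper.

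\textbf{The paper's route.} The paper stays at the confluent point throughout. Andr\'eief's identity and Jacobi's formula give $\mathbb{E}[R_l]=\tr\bigl(A^{-1}(\alpha)A(\alpha+l)\bigr)$ with $A(\alpha)=\bigl(\Gamma^{(i-1)}(\alpha+j)\bigr)$. An explicit LU factorization $A=A_LA_U$ (the Leibniz rule applied to $\Gamma(\alpha+1)(\alpha+1)_{j-1}$) then does the work:
\begin{itemize}
\item $A_L^{-1}(\alpha)A_L(\alpha+l)$ becomes the lower-triangular matrix $\bigl(\binom{i-1}{k-1}h^{(i-k)}(\alpha)\bigr)$;
\item $A_U(\alpha+l)A_U^{-1}(\alpha)$ becomes the Taylor-shift matrix $\bigl(l^{j-k}/(j-k)!\bigr)$.
\end{itemize}
The trace is then a finite double sum, collapsed by Chu--Vandermonde.

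\textbf{Your route.} You resolve the confluence first. With distinct exponents, the monic structure of $(a+1)_{i-1}$ gives $\det[\Gamma(a_j+i)]=\prod_j\Gamma(a_j+1)\,\Delta(a)$, which is the unconfluent counterpart of the paper's LU decomposition. Inserting $x^l$ shifts a single exponent. The limit $a\to\alpha$ is then taken by residues, and $\binom{m}{i+1}$ comes straight from expanding $(1+l/(z-\alpha))^m$ rather than from a binomial summation.

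\textbf{What each buys.} The paper's argument is pure finite-dimensional algebra with no limits to justify. Yours costs an interchange-of-limits argument and a separate treatment of $l=0$, but it gives more. You get the moment formula for the deformed ensemble with distinct exponents, and the compact representation
\[
\mathbb{E}[R_l]=\frac{1}{l}\,\frac{1}{2\pi\mathrm{i}}\oint h(z)\left(1+\frac{l}{z-\alpha}\right)^{m}\dd z .
\]
This should extend naturally to joint moments of several $R_{l}$'s, since each insertion shifts exponents and leads to multiple contour integrals. That is exactly the direction the paper flags for higher cumulants.

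\textbf{Sign of the prefactor.} Near $z=a_j$ one has $\frac{a_j-z-l}{a_j-z}=1+\frac{l}{z-a_j}$. So the residue of your integrand at $a_j$ is $+l$ times the $j$-th term, and the prefactor must be $+\frac{1}{l}$ for a positively oriented contour, not $-\frac{1}{l}$. Your own $m=1$ check exposes this. With that fix, the final expansion gives~(\ref{eq:k}) exactly.

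\textbf{Justifying the limit.} The remark that the $\Delta(a)$'s ``cancel, so the limit is legitimate'' is not by itself a justification. What you need is that the numerator and normalization integrals, each divided by $\Delta(a)$, converge to the corresponding BKM integrals, with the normalization limit nonzero. This follows by dominated convergence:
\begin{itemize}
\item write $\det[x_i^{a_j}]/\Delta(a)=\det\bigl[f_i[a_1,\dots,a_j]\bigr]$, using divided differences of $f_i(a)=x_i^{a}$;
\item bound each entry via Hermite--Genocchi by $\sup_{|a-\alpha|\le\epsilon}|\ln x_i|^{j-1}x_i^{a}/(j-1)!$, choosing $\epsilon$ with $\alpha-\epsilon>-1$.
\end{itemize}

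\textbf{Minor point.} The limit stated in Step~1 should carry the factor $\prod_i x_i^{\alpha}$.
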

The proof of Proposition~\ref{main:k} is found in Section~\ref{sec:P1}. As in~(\ref{eq:k}), we will use superscript in a parentheses $(i)$ to denote $i$-th derivative with respect to $\alpha$.

As a consequence of Proposition~\ref{main:k}, we now present the main result of this work in the following corollary.
\begin{corollary}\label{main:m}
The average von Neumann entropy~(\ref{eq:vN}) of the BKM ensemble~(\ref{eq:fBKM}) is given by
\begin{equation}\label{eq:m}
\mathbb{E}\left[S\right]=\psi_{0}(\beta+1)-\frac{1}{\beta}\left(\sum_{i=0}^{m-1}\binom{m}{i+1}\frac{1}{i!}\left(\alpha+\frac{m+1}{i+2}\right)\psi_{i}(\alpha+1)+\frac{1}{2}m(m+1)\right).
\end{equation}
\end{corollary}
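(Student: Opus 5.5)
The plan is to reduce the constrained ensemble~(\ref{eq:fBKM}) to the unconstrained BKM ensemble~(\ref{eq:BKM}) by the usual trace-separation argument, write $\mathbb{E}[S]$ in terms of the unconstrained average of $\sum_i x_i\ln x_i$, and obtain that average by differentiating Proposition~\ref{main:k} in $l$ at $l=1$.

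\textbf{Step 1: separating the trace.} The unconstrained ensemble has density proportional to $\prod_{i<j}(x_i-x_j)(\ln x_i-\ln x_j)\prod_i x_i^{\alpha}\ee^{-x_i}$. Substitute $x_i=r\lambda_i$ with $r=\sum_i x_i$. The Vandermonde factor scales as $r^{m(m-1)/2}$, and the log-differences are scale invariant. The factor $\prod_i x_i^\alpha$ contributes $r^{m\alpha}$, and the Jacobian contributes $r^{m-1}$. The joint density therefore factorizes as $r^{\beta-1}\ee^{-r}/\Gamma(\beta)$ times $f(\lambda)$, with $\beta$ as in~(\ref{eq:b}), so $r\sim\mathrm{Gamma}(\beta)$ is independent of $\lambda$. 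This is the same computation that yields the prefactor $\Gamma(\beta)/C(\alpha)$ in~(\ref{eq:fBKM}) via Lemma~\ref{l:C}.

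\textbf{Step 2: reducing to a moment derivative.} Using $\sum_i\lambda_i=1$, we have
\[
\sum_i x_i\ln x_i=r\ln r-r\,S.
\]
Taking expectations and using independence together with $\mathbb{E}[r]=\beta$ and $\mathbb{E}[r\ln r]=\beta\psi_0(\beta+1)$ gives
\[
\mathbb{E}[S]=\psi_0(\beta+1)-\frac{1}{\beta}\,\mathbb{E}\Big[\sum_i x_i\ln x_i\Big]
=\psi_0(\beta+1)-\frac{1}{\beta}\,\frac{\dd}{\dd l}\mathbb{E}[R_l]\Big|_{l=1}.
\]
Differentiating~(\ref{eq:k}) in $l$ requires~(\ref{eq:k}) to hold for real $l$ near $1$, not only at integers. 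I would justify this by noting that both sides are analytic in $l$ for $\mathrm{Re}\,l>-\alpha-1$, or by checking that the proof of Proposition~\ref{main:k} never used integrality of $l$. I expect this point to be the main (mild) obstacle; the rest is bookkeeping.

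\textbf{Step 3: evaluation.} Write $F_l=\Gamma(\alpha+l+1)/\Gamma(\alpha+1)$. Differentiating the general term of~(\ref{eq:k}) gives
\[
\frac{i\,l^{i-1}}{i!}F_l^{(i)}+\frac{l^{i}}{i!}\big(\partial_l F_l\big)^{(i)}.
\]
At $l=1$ we have $F_1=\alpha+1$, so $F_1^{(0)}=\alpha+1$, $F_1^{(1)}=1$, and $F_1^{(i)}=0$ for $i\ge2$. Hence the first piece contributes only at $i=1$, giving $\binom{m}{2}$.

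For the second piece, $\partial_lF_l|_{l=1}=(\alpha+1)\psi_0(\alpha+2)=(\alpha+1)\psi_0(\alpha+1)+1$. Its $i$-th $\alpha$-derivative is
\[
(\alpha+1)\psi_i(\alpha+1)+i\,\psi_{i-1}(\alpha+1)+\delta_{i0}.
\]
The $\delta_{i0}$ term contributes $\binom{m}{1}=m$. Combined with $\binom{m}{2}$ from the first piece, the constant is $\tfrac12 m(m+1)$.

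For the $i\,\psi_{i-1}$ term, shift the index $i\to i+1$; it becomes $\binom{m}{i+2}\psi_i(\alpha+1)/i!$. Since $\binom{m}{i+2}=\binom{m}{i+1}\frac{m-i-1}{i+2}$, the total coefficient of $\binom{m}{i+1}\psi_i(\alpha+1)/i!$ is
\[
\alpha+1+\frac{m-i-1}{i+2}=\alpha+\frac{m+1}{i+2}.
\]
The shifted sum runs to $i=m-2$, but the extra $i=m-1$ term vanishes because $m-i-1=0$ there. Substituting into Step 2 gives~(\ref{eq:m}).
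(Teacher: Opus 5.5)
Your proposal is correct and follows the paper's proof step for step. The trace separation is exactly Lemma~\ref{l:f}, and your evaluation of $\frac{\dd}{\dd l}\mathbb{E}[R_l]$ at $l=1$ reproduces the paper's computation: the constant $\binom{m}{1}+\binom{m}{2}$, the identity for $((\alpha+1)\psi_0(\alpha+1))^{(i)}$, and the index shift using $\binom{m}{i+2}=\binom{m}{i+1}\frac{m-i-1}{i+2}$. On the real-$l$ point, use your second justification: Andr\'eief, Jacobi's formula and the shift/Leibniz identities of Lemma~\ref{l:de} all work verbatim for real $l>-\alpha-1$, so Proposition~\ref{main:k} holds for real $l$ near $1$. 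Analyticity in $l$ plus agreement at the integers would not by itself force equality without a Carlson-type growth bound.
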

The proof of Corollary~\ref{main:m} is in Section~\ref{sec:C1}. The function $\psi_{i}(x)$ in~(\ref{eq:m}) is the $i$-th order polygamma function~\cite{Brychkov}
\begin{equation}\label{eq:polygamma}
\psi_{i}(x)=\frac{\dd^{i+1}}{\dd x^{i+1}}\ln\Gamma(x)=\frac{\dd^{i}}{\dd x^{i}}\psi_{0}(x).
\end{equation}

Contrary to the computation of average entropy over other ensembles~\cite{Page93,Foong94,Ruiz95,Kumar19,Wei20BHA,Bianchi21}, the derivation of~(\ref{eq:k}) solely relies on the density~(\ref{eq:BKM}) of BKM ensemble, without using its correlation kernel or associated orthogonal polynomials. In fact, explicit expressions of these objects of the BKM ensemble are unavailable in the literature with the exception of a double-contour representation of the correlation kernel in~\cite{Forrester17}. Nevertheless, it remains to be seen if an efficient method can be developed to derive higher-order moments of entropy using only eigenvalue densities. A natural first step is to rederive mean entropy formulas in~\cite{Page93,Foong94,Ruiz95,Kumar19,Wei20BHA,Bianchi21} following a similar procedure of this work.

As examples, we list in below special cases of the main result~(\ref{eq:m}) for the first a few $m$. \\

$\blacktriangleright~m=2$:
\begin{equation}
\mathbb{E}\left[S\right]=-\frac{\alpha+1}{2\alpha+3}\psi_{1}(\alpha+1)-\psi_{0}(\alpha+1)+\psi_{0}(2\alpha+4)-\frac{3}{2\alpha+3}
\end{equation}

$\blacktriangleright~m=3$:
\begin{eqnarray}
\mathbb{E}\left[S\right]&=&-\frac{\alpha+1}{6(\alpha+2)}\psi_{2}(\alpha+1)-\frac{3\alpha+4}{3(\alpha+2)}\psi_{1}(\alpha+1)-\psi_{0}(\alpha+1)\nonumber\\
&&+~\!\psi_{0}(3\alpha+7)-\frac{2}{\alpha+2}
\end{eqnarray}

$\blacktriangleright~m=4$:
\begin{eqnarray}
\mathbb{E}\left[S\right]&=&-\frac{\alpha+1}{12(2\alpha+5)}\psi_{3}(\alpha+1)-\frac{4\alpha+5}{4(2\alpha+5)}\psi_{2}(\alpha+1)-\frac{3\alpha+5}{2\alpha+5}\psi_{1}(\alpha+1)\nonumber\\
&&-~\!\psi_{0}(\alpha+1)+\psi_{0}(4\alpha+11)-\frac{5}{2\alpha+5}
\end{eqnarray}
Compared to the leading-order expression~(\ref{eq:asy}) that may reflect the asymptotic behavior of digamma function~\cite{Brychkov}
\begin{equation}
\psi_{0}(x)=\ln x-\frac{1}{2x}+\mathcal{O}\left(\frac{1}{x^2}\right),
\end{equation}
the obtained exact average entropy formula~(\ref{eq:m}) reveals its richer finite-size structure, where polygamma functions up to $m$-th order are involved. 

\begin{figure}[!h]
\begin{center}
\hspace{0cm}\includegraphics[width=1.0\linewidth]{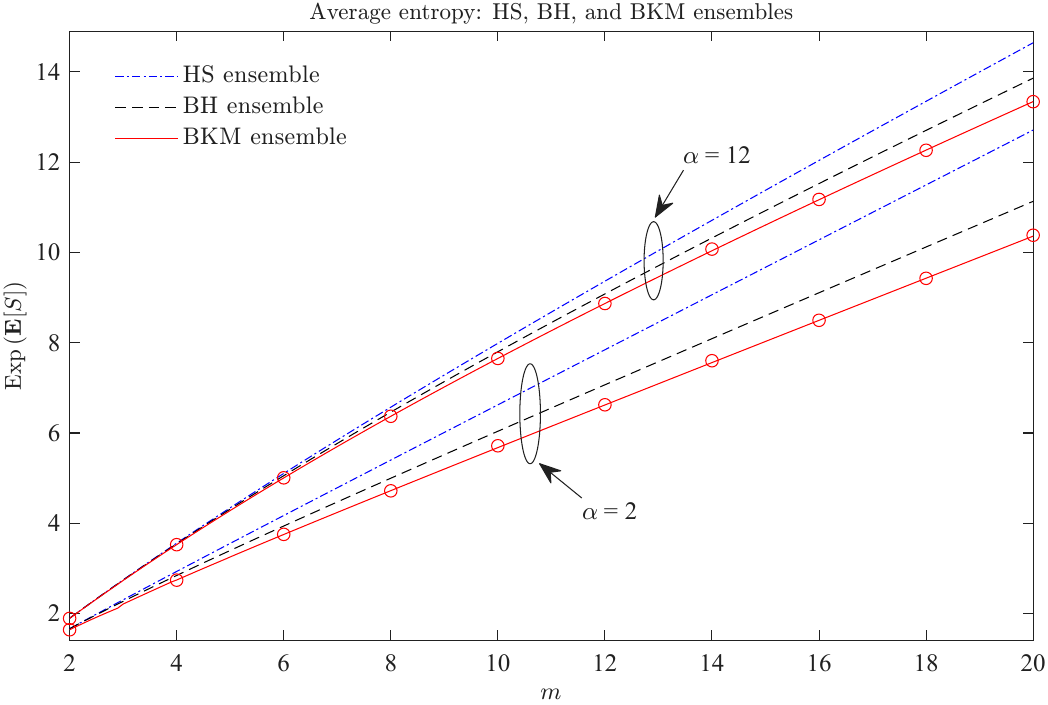}\hspace{0cm}
\caption{Average entropy comparison: the dash-dot lines, dash lines, and solid lines are respectively drawn using exact formulas of HS ensemble in~(\ref{eq:HSm}), BH ensemble in~(\ref{eq:BHm}), and BKM ensemble in~(\ref{eq:m}). The circles represent numerical simulations of BKM ensemble.}
\label{fig:1}
\end{center}
\vspace{0cm}
\end{figure}

As a useful property in quantum information processing, the BKM ensemble~(\ref{eq:fBKM}) is shown in~\cite{Miller25} to attain minimal asymptotic average entropy when compared to Hilbert-Schmidt and Bures-Hall ensembles. It is an open question whether the minimal average entropy property also holds when the dimension is small. The exact formula~(\ref{eq:m}) confirms the validity of this property for arbitrary system dimensions. As illustrated in Figure~\ref{fig:1}, using finite-size formulas~(\ref{eq:HSm}),~(\ref{eq:BHm}), and~(\ref{eq:m}), the average entropy of BKM ensemble is indeed the smallest among three considered ensembles for different values of parameter $\alpha$ and system dimension $m$. By the standard interpretation of von Neumann entropy~(\ref{eq:vN}) supported in $S\in[0,\ln m]$ with $S=0$ for pure states and $S=\ln m$ for maximally mixed states, the BKM ensemble therefore generates the least mixed states on average, the HS ensemble being the most mixed, and the BH ensemble interpolates between the two. Moreover, the gap among the three ensembles becomes more evident for large $m$, which diminishes for small $m$.

It is also seen from Figure~\ref{fig:1} that all three ensembles exhibit an approximate linear increase in the exponential of $\mathbb{E}\left[S\right]$ as dimension $m$ increases, where the scale of y-axis was chosen for improved visualization. In addition, it is observed that the average entropy increases with parameter $\alpha$. Since $\alpha$ is related to the environment (ancilla) dimension, this behavior is consistent with the fact that a larger environment dimension generates more mixed typical entanglement and hence larger average entropy. Last but not least, as a sanity check of the derived formula~(\ref{eq:m}), we perform numerical simulations using log-gas methods~\cite{Kumar19}, represented by circles in Figure~\ref{fig:1}. As expected, the simulation results match well with the analytical formula~(\ref{eq:m}). 

\section{Derivation of average entropy formulas}\label{sec:proof}
In this section, proofs to the main results are provided. Preparatory results required for the later proofs are presented first as summarized in the three lemmas in Section~\ref{sec:aux}. Subsequently, Proposition~\ref{main:k} and Corollary~\ref{main:m} are proved in Section~\ref{sec:P1} and Section~\ref{sec:C1}, respectively.

\subsection{Technical preliminaries}\label{sec:aux}
As a standard first step to facilitate entropic moment calculation~\cite{Foong94,HWC21,HW25,Page93,Ruiz95,Kumar19,VPO16,Wei17,Wei20,Wei20BHA,Wei20BH,Wei23,Wei26,WHW25,WHW26}, one converts the average entropy over the BKM ensemble~(\ref{eq:fBKM}) to that of an ensemble without the delta function constraint
\begin{equation}\label{eq:BKM}
g(x)=\frac{1}{C(\alpha)}\prod_{1\leq i<j\leq m}\left(x_{i}-x_{j}\right)\left(\ln x_{i}-\ln x_{j}\right)\prod_{i=1}^{m}x_{i}^{\alpha}\ee^{-x_{i}}
\end{equation}
supported in $0\leq x_{m}<\dots<x_{1}<\infty$. The relation between average of the induced entropy
\begin{equation}\label{eq:TvN}
T=\sum_{i=1}^{m}x_{i}\ln x_{i}
\end{equation}
over the unconstrained BKM ensemble~(\ref{eq:BKM}) and average of the entropy~(\ref{eq:vN}) over the BKM ensemble~(\ref{eq:fBKM}) is presented in the lemma below.
\begin{lemma}\label{l:f}
\begin{equation}\label{eq:ST}
\mathbb{E}\left[S\right]=\psi_{0}(\beta+1)-\frac{1}{\beta}\mathbb{E}\left[T\right].
\end{equation}
\end{lemma}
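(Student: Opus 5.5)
The plan is the standard change of variables that factors the unconstrained ensemble~(\ref{eq:BKM}) into a radial part and the constrained ensemble~(\ref{eq:fBKM}). Write $x_i=r\lambda_i$ with $r=\sum_{i=1}^{m}x_i\in(0,\infty)$ and $\lambda$ on the simplex $\sum_{i}\lambda_i=1$. The key structural observation is homogeneity of the weight. The Vandermonde factor $\prod_{i<j}(x_i-x_j)$ scales as $r^{m(m-1)/2}$. The logarithmic factor is scale invariant, since $\ln x_i-\ln x_j=\ln\lambda_i-\ln\lambda_j$. The product $\prod_i x_i^{\alpha}$ gives $r^{m\alpha}$, and the Jacobian of $(x_1,\dots,x_m)\mapsto(r,\lambda)$ with the delta constraint contributes $r^{m-1}$. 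The total power of $r$ is therefore
\[
\frac{m(m-1)}{2}+m\alpha+m-1=\beta-1,
\]
with $\beta$ as in~(\ref{eq:b}). Hence $g(x)\dd x=\frac{r^{\beta-1}\ee^{-r}}{\Gamma(\beta)}\dd r\cdot f(\lambda)\dd\lambda$. This shows that $r$ is Gamma$(\beta)$ distributed and independent of $\lambda$, which is distributed according to~(\ref{eq:fBKM}). This is exactly why the factor $\Gamma(\beta)/C(\alpha)$ appears in~(\ref{eq:fBKM}), and it is consistent with the same $C(\alpha)$ normalizing both densities.

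Next, express $T$ in the new variables:
\[
T=\sum_i r\lambda_i(\ln r+\ln\lambda_i)=r\ln r-rS,
\]
using $\sum_i\lambda_i=1$. By independence,
\[
\mathbb{E}[T]=\mathbb{E}[r\ln r]-\mathbb{E}[r]\,\mathbb{E}[S].
\]
For the Gamma$(\beta)$ variable we have $\mathbb{E}[r]=\beta$. We also have
\[
\mathbb{E}[r\ln r]=\frac{1}{\Gamma(\beta)}\int_0^\infty r^{\beta}\ln r\,\ee^{-r}\dd r=\frac{\Gamma'(\beta+1)}{\Gamma(\beta)}=\beta\,\psi_0(\beta+1).
\]
Substituting gives $\mathbb{E}[T]=\beta\psi_0(\beta+1)-\beta\,\mathbb{E}[S]$, and rearranging yields~(\ref{eq:ST}).

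The main point needing care is the Jacobian and ordering bookkeeping. The map $x\mapsto(r,\lambda_1,\dots,\lambda_{m-1})$ has Jacobian $r^{m-1}$, and the ordering $x_m<\dots<x_1$ is preserved because $r>0$. One must also confirm that the normalizations match, i.e.\ that integrating the factored form reproduces $C(\alpha)$ with the factor $\Gamma(\beta)$. This is immediate once the density is factored, because $\int_0^\infty r^{\beta-1}\ee^{-r}\dd r=\Gamma(\beta)$. Everything else is routine.
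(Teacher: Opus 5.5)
Your proposal is correct and follows the paper's proof essentially step for step. The shared steps are: the factorization of the unconstrained density into a Gamma$(\beta)$ trace density times $f(\lambda)$, with the trace independent of $\lambda$; the identity $T=R\ln R-RS$; and the trace moments $\mathbb{E}[R]=\beta$, $\mathbb{E}[R\ln R]=\beta\,\psi_{0}(\beta+1)$. Your explicit power count showing the exponent of $r$ equals $\beta-1$, which the paper leaves implicit, is a helpful addition.
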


\begin{proof}
Under the change of variables
\begin{equation}\label{eq:cv}
\lambda_i=\frac{x_i}{R},~~~~~~i=1,\dots,m
\end{equation}
with
\begin{equation}\label{eq:Rd}
R=\sum_{i=1}^{m}x_i
\end{equation}
denoting the trace, the unconstrained ensemble~(\ref{eq:BKM}) is factorized as
\begin{equation}\label{eq:f}
g(x)\prod_{i=1}^{m}\dd x_i=h(R)f(\lambda)\dd R\prod_{i=1}^m\dd\lambda_i,
\end{equation}
where
\begin{equation}\label{eq:R}
h(R)=\frac{1}{\Gamma(\beta)}R^{\beta-1}\ee^{-R},~~~~~~0\leq R<\infty
\end{equation}
is density of the trace~(\ref{eq:Rd}) with $\beta$ defined in~(\ref{eq:b}).

The factorization~(\ref{eq:f}) implies that $R$ is independent of $\lambda$, and is hence independent of $S$. To exploit this fact, one writes $T$ by the change of variables~(\ref{eq:cv}) as
\begin{equation}
T=R\ln R-RS,
\end{equation}
and consequently
\begin{equation}\label{eq:TS}
\mathbb{E}[T]=\mathbb{E}[R\ln R]-\mathbb{E}[R]\mathbb{E}[S].
\end{equation}
Inserting the computed integrals over~(\ref{eq:R}),
\begin{eqnarray}
\mathbb{E}[R] &=& \beta \\
\mathbb{E}[R\ln R] &=& \beta\psi_{0}(\beta+1) \label{eq:RT}
\end{eqnarray}
into~(\ref{eq:TS}) completes the proof of Lemma~\ref{l:f}.
\end{proof}

Comparing with the result in~\cite[Eq.(23)]{Wei20BH}, it is interesting to observe that the two unconstrained ensembles~(\ref{eq:BKM}) and~(\ref{eq:BH}), despite being structurally distinct, share the same trace distribution~(\ref{eq:R}).

By Lemma~\ref{l:f}, one now works with the unconstrained ensemble~(\ref{eq:BKM}) to obtain $\mathbb{E}[T]$ by first computing $\mathbb{E}\left[R_{l}\right]$. Key ingredients that lead to the expression of $\mathbb{E}\left[R_{l}\right]$ in~(\ref{eq:k}) are properties of the normalization constant $C(\alpha)$,
the results are summarized in the next two lemmas. In Lemma~\ref{l:C} below, we calculate the normalization constant $C(\alpha)$ for any $\alpha>-1$ by exploring its Wronskian structure, where the special case $C(-1/2)$ was obtained in an inductive manner in~\cite{Miller25}. The constant $C(\alpha)$ can also be obtained by the connection of BKM ensemble to Muttalib-Borodin ensemble~\cite{Forrester17,Cheliotis18}.
\begin{lemma}\label{l:C}
\begin{equation}\label{eq:lC}
C(\alpha)=\Gamma^{m}(\alpha+1)\prod_{i=1}^{m}\Gamma(i).
\end{equation}
\end{lemma}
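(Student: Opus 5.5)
We need $C(\alpha)=\int_{0\le x_m<\dots<x_1}\prod_{i<j}(x_i-x_j)(\ln x_i-\ln x_j)\prod_i x_i^\alpha e^{-x_i}\,\dd x$. The plan is to write both products as Vandermonde determinants and apply Andréief's identity. Up to signs that cancel for the ordered domain, $\prod_{i<j}(x_i-x_j)=\det[x_i^{k-1}]$ and $\prod_{i<j}(\ln x_i-\ln x_j)=\det[(\ln x_i)^{k-1}]$. Since the integrand is symmetric, the ordered integral equals $1/m!$ times the integral over $(0,\infty)^m$. Andréief's identity then gives
\begin{equation*}
C(\alpha)=\det\left[\int_0^\infty x^{j+\alpha}(\ln x)^{k}\ee^{-x}\dd x\right]_{j,k=0}^{m-1}=\det\left[\Gamma^{(k)}(\alpha+j+1)\right]_{j,k=0}^{m-1},
\end{equation*}
because differentiating $x^{\alpha+j}$ $k$ times in $\alpha$ produces $(\ln x)^k$. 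This is the Wronskian structure mentioned before the lemma. Writing $F_j(\alpha)=\Gamma(\alpha+j+1)$, the matrix is the Wronskian $W[F_0,\dots,F_{m-1}](\alpha)$, with row $j$ carrying the derivatives of $F_j$.

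Next I simplify the Wronskian using $F_j=(\alpha+1)_j\,\Gamma(\alpha+1)$, where $(\alpha+1)_j$ is a polynomial in $\alpha$ of degree $j$ with leading coefficient $1$. Set $G=\Gamma(\alpha+1)$. The Wronskian obeys $W[G p_0,\dots,G p_{m-1}]=G^m W[p_0,\dots,p_{m-1}]$ for any functions $p_j$; this is standard and follows from the Leibniz rule plus column operations. So $C(\alpha)=\Gamma^m(\alpha+1)\,W[p_0,\dots,p_{m-1}]$ with $p_j=(\alpha+1)_j$.

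Since $p_j$ is monic of degree $j$, row operations reduce each $p_j$ to $\alpha^j$ without changing the determinant. The Wronskian of $1,\alpha,\dots,\alpha^{m-1}$ is upper triangular after ordering, with diagonal entries $k!$, so it equals $\prod_{k=0}^{m-1}k!=\prod_{i=1}^m\Gamma(i)$. This yields the claimed formula.

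The main obstacle is bookkeeping rather than any single hard step. I need to check that the sign conventions for the two Vandermonde products, together with the $1/m!$ from symmetrization and Andréief's $m!$, cancel exactly. The product is positive on the ordered domain because both factors have the same sign, so the final determinant must come out positive. I also need to check the Wronskian factorization identity. I would verify that identity first, directly via the Leibniz expansion: column $k$ equals $\sum_r\binom{k}{r}G^{(k-r)}p_j^{(r)}$, and triangular column operations reduce it to $G\,p_j^{(k)}$. Finally I would sanity-check the result at $m=2$: $\Gamma(\alpha+1)\Gamma'(\alpha+2)-\Gamma'(\alpha+1)\Gamma(\alpha+2)=\Gamma(\alpha+1)^2$, which is correct.
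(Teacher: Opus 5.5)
Your proposal is correct and follows essentially the same route as the paper: Andr\'eief's identity turns $C(\alpha)$ into the Wronskian $\det\left(\Gamma^{(i-1)}(\alpha+j)\right)$, and the factor $\Gamma^{m}(\alpha+1)$ is pulled out. The paper does this via Karlin's identity $W_m(f_1,\dots,f_m)=f_1^{m}W_{m-1}((f_2/f_1)^{(1)},\dots)$, which carries the same content as your $W[Gp_0,\dots,Gp_{m-1}]=G^{m}W[p_0,\dots,p_{m-1}]$; the remaining Wronskian of the Pochhammer polynomials is then triangular with diagonal entries $i!$. Your row reduction to monomials is harmless but unnecessary, since the Pochhammer polynomials already give a triangular matrix; and your tracking of the $1/m!$ from the ordered domain is slightly more careful than the paper's write-up.
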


\begin{proof}
Applying Andr\'{e}ief's identity~\cite{Forrester} over the ensemble~(\ref{eq:BKM}), one has
\begin{eqnarray}
C(\alpha) &=&\int_{0}^{\infty}\!\!\!\!\cdots\!\int_{0}^{\infty}\det\left(x_{i}^{j-1}\right)\det\left(\ln^{i-1}\!x_{j}\right)\prod_{i=1}^{m}x_{i}^{\alpha} \ee^{-x_{i}}\dd x_{i} \\
&=& \det\left(\int_{0}^{\infty}\!x^{\alpha+j-1}\ln^{i-1}\!x~\!\ee^{-x}\dd x\right)_{i,j=1}^{m} \\
&=& \det\left(\Gamma^{(i-1)}(\alpha+j)\right)_{i,j=1}^{m}, \label{eq:C1}
\end{eqnarray}
where the determinant~(\ref{eq:C1}) is a Wronskian~\cite{Karlin}
\begin{equation}\label{eq}
W_{m}\left(f_{1},f_{2},\dots,f_{m}\right)=\det
\begin{pmatrix}
f_{1} & f_{1}^{(1)} & \cdots & f_{1}^{(m-1)} \\
f_{2} & f_{2}^{(1)} & \cdots & f_{2}^{(m-1)} \\
\vdots & \vdots & \ddots & \vdots \\
f_{m} & f_{m}^{(1)} & \cdots & f_{m}^{(m-1)}
\end{pmatrix}	
\end{equation}
with
\begin{equation}
f_{i}=\Gamma(\alpha+i).
\end{equation}

Using the Wronskian identity~\cite[Eq.(2.5)]{Karlin}
\begin{equation}
W_{m}\left(f_{1},f_{2},\dots,f_{m}\right)=f_{1}^{m}W_{m-1}\left(\left(\frac{f_{2}}{f_{1}}\right)^{(1)},\left(\frac{f_{3}}{f_{1}}\right)^{(1)},\dots,\left(\frac{f_{m}}{f_{1}}\right)^{(1)}\right),
\end{equation}
we now have
\begin{eqnarray}
C(\alpha) &=& \Gamma^{m}(\alpha+1)W_{m-1}\left((\alpha+1)^{(1)},(\alpha+1)^{(1)}_{2},\dots,(\alpha+1)^{(1)}_{m-1}\right) \\
&=& \Gamma^{m}(\alpha+1)\det
\begin{pmatrix}
(\alpha+1)^{(1)} & (\alpha+1)^{(2)} & \cdots & (\alpha+1)^{(m-1)} \\
(\alpha+1)^{(1)}_{2} & (\alpha+1)^{(2)}_{2} & \cdots & (\alpha+1)^{(m-1)}_{2} \\
\vdots & \vdots & \ddots & \vdots \\
(\alpha+1)^{(1)}_{m-1} & (\alpha+1)^{(2)}_{m-1} & \cdots & (\alpha+1)^{(m-1)}_{m-1}
\end{pmatrix},\label{eq:ci}	
\end{eqnarray}
where
\begin{equation}
(\alpha)_{i}=\frac{\Gamma(\alpha+i)}{\Gamma(\alpha)}
\end{equation}
is the Pochhammer's symbol. Observing that the upper triangular entries in the matrix~(\ref{eq:ci}) are zeros by the fact that for $j>i$,
\begin{equation}
(\alpha+1)^{(j)}_{i}=0,
\end{equation}
whereas the $i$-th, $i=1,\dots,m-1$, diagonal entry is
\begin{equation}
(\alpha+1)_{i}^{(i)}=\left((\alpha+1)(\alpha+2)\dots(\alpha+i)\right)^{(i)}=i!.
\end{equation}
This leads to
\begin{equation}
C(\alpha)=\Gamma^{m}(\alpha+1)\prod_{i=1}^{m-1}i!=\Gamma^{m}(\alpha+1)\prod_{i=1}^{m}\Gamma(i),
\end{equation}
which is the claimed result~(\ref{eq:lC}).
\end{proof}

In the following lemma, we now state some properties of the matrix
\begin{equation}\label{eq:A}
A(\alpha)=\left(\Gamma^{(i-1)}(\alpha+j)\right)_{i,j=1}^{m}
\end{equation}
underlying the determinantal representation~(\ref{eq:C1}) of normalization constant
\begin{equation}\label{eq:C1r}
C(\alpha)=\det\left(A(\alpha)\right).
\end{equation}
These properties are crucial to establish the main results.

\begin{lemma}\label{l:de}
The matrix $A(\alpha)$ admits a decomposition
\begin{equation}\label{eq:LU}
A(\alpha)=A_{L}(\alpha)A_{U}(\alpha)
\end{equation}
into lower and upper triangular matrices
\begin{equation}\label{eq:L}
A_{L}(\alpha)=\left(\binom{i-1}{k-1}\Gamma^{(i-k)}(\alpha+1)\right)_{i,k=1}^{m}
\end{equation}
and
\begin{equation}
A_{U}(\alpha)=\left((\alpha+1)_{j-1}^{(k-1)}\right)_{k,j=1}^{m},
\end{equation}
respectively. These triangular matrices further satisfy
\begin{eqnarray}
A_{L}^{-1}(\alpha)A_{L}(\alpha+l) &=& \left(\binom{i-1}{k-1}\left(\frac{\Gamma(\alpha+l+1)}{\Gamma(\alpha+1)}\right)^{(i-k)}\right)_{i,k=1}^{m}, \label{eq:LL}  \\
A_{U}(\alpha+l)A_{U}^{-1}(\alpha) &=& \left(\frac{l^{j-k}}{(j-k)!}\right)_{k,j=1}^{m}, \label{eq:UU}
\end{eqnarray}
where $A^{-1}$ denotes inverse of a matrix $A$.
\end{lemma}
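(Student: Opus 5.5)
The plan is to establish the factorization~(\ref{eq:LU}) entrywise, and then to obtain~(\ref{eq:LL}) and~(\ref{eq:UU}) by recognizing both triangular families as matrix representations of simple operators. For the factorization, write $\Gamma(\alpha+j)=\Gamma(\alpha+1)\,(\alpha+1)_{j-1}$ and differentiate $i-1$ times with the Leibniz rule:
\begin{equation*}
\Gamma^{(i-1)}(\alpha+j)=\sum_{k=1}^{i}\binom{i-1}{k-1}\Gamma^{(i-k)}(\alpha+1)\,(\alpha+1)_{j-1}^{(k-1)} .
\end{equation*}
This is exactly the $(i,j)$ entry of $A_{L}(\alpha)A_{U}(\alpha)$. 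Two structural facts follow immediately. The binomial vanishes for $k>i$, so $A_{L}$ is lower triangular. Since $(\alpha+1)_{j-1}$ is a polynomial of degree $j-1$ in $\alpha$, its $(k-1)$-th derivative vanishes for $k>j$, so $A_{U}$ is upper triangular. Its diagonal entries are $(k-1)!$ and the diagonal of $A_L$ is $\Gamma(\alpha+1)$, which makes both matrices invertible. This step is routine.

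For~(\ref{eq:UU}), I would interpret $A_U(\alpha)$ as a change of basis. Column $j$ lists the derivatives $p_j^{(k-1)}(\alpha)$ of the polynomial $p_j=(\alpha+1)_{j-1}$, so $A_U(\alpha)$ maps the coefficient vector $(c_k)$ of $p_j$ in the basis $\{(t-\alpha)^{k-1}/(k-1)!\}_{k=1}^{m}$ (Taylor at $\alpha$) to those derivatives. Taylor expanding each derivative at $\alpha+l$ gives
\begin{equation*}
p^{(k-1)}(\alpha+l)=\sum_{j\ge k}\frac{l^{j-k}}{(j-k)!}\,p^{(j-1)}(\alpha),
\end{equation*}
which is exact because $\deg p\le m-1$. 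In matrix form this reads $A_U(\alpha+l)=S_l\,A_U(\alpha)$ with $S_l=\left(l^{j-k}/(j-k)!\right)_{k,j}$. Right-multiplying by $A_U^{-1}(\alpha)$ gives~(\ref{eq:UU}).

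For~(\ref{eq:LL}), the idea is that $A_L(\alpha)$ is the matrix of ``multiplication by $\Gamma(\alpha+1)$'' acting on jets, that is, on truncated derivative vectors. Precisely, for a smooth $u$ the Leibniz rule gives $(\Gamma(\alpha+1)u)^{(i-1)}=\sum_k \binom{i-1}{k-1}\Gamma^{(i-k)}(\alpha+1)u^{(k-1)}$. Hence $A_L(\alpha)=M[\Gamma(\alpha+1)]$, where $M[\phi]$ denotes the jet-multiplication matrix $\bigl(\binom{i-1}{k-1}\phi^{(i-k)}\bigr)$. The map $\phi\mapsto M[\phi]$ is multiplicative, $M[\phi\chi]=M[\phi]M[\chi]$, since both sides represent multiplication on $(m-1)$-jets. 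It follows that $M[\phi]^{-1}=M[1/\phi]$. Here $A_L(\alpha+l)$ means $M$ applied to the function $\alpha\mapsto\Gamma(\alpha+l+1)$, with derivatives taken in $\alpha$, so
\begin{equation*}
A_L^{-1}(\alpha)A_L(\alpha+l)=M\!\left[\frac{\Gamma(\alpha+l+1)}{\Gamma(\alpha+1)}\right],
\end{equation*}
which is~(\ref{eq:LL}).

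The main obstacle is justifying the multiplicativity of $M$ cleanly rather than through a binomial double-sum computation. I would do it by noting that $M[\phi]$ is the matrix of multiplication by $\sum_n \phi^{(n)}(\alpha)\epsilon^n/n!$ in the truncated ring $\mathbb{R}[\epsilon]/(\epsilon^m)$, written in the basis $\epsilon^{k-1}/(k-1)!$. Multiplicativity then follows because that ring is associative. The Taylor-shift identity used for~(\ref{eq:UU}) can be phrased in the same algebra, so the two arguments share one framework.
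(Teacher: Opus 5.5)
Your proposal is correct and follows essentially the same route as the paper. Both use the Leibniz rule for the factorization. Both use multiplicativity of the lower-triangular matrices $\bigl(\binom{i-1}{k-1}\phi^{(i-k)}(\alpha)\bigr)_{i,k}$, first with $\phi=1/\Gamma(\alpha+1)$ to get $A_L^{-1}(\alpha)$ and then composed with $\Gamma(\alpha+l+1)$. Both use the exact Taylor shift $A_U(\alpha+l)=\bigl(l^{j-k}/(j-k)!\bigr)_{k,j}A_U(\alpha)$ for the upper factor. The only difference is how multiplicativity is justified: the paper checks it by a direct double sum using $\binom{i}{k}\binom{k}{j}=\binom{i}{j}\binom{i-j}{k-j}$, while your identification with multiplication in $\mathbb{R}[\epsilon]/(\epsilon^{m})$ in the basis $\epsilon^{k-1}/(k-1)!$ proves the same identity more conceptually.
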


Before proving Lemma~\ref{l:de}, we note that another LU decomposition of matrix $A(\alpha)$ that involves Stirling numbers was proposed in~\cite{Cheliotis18} for computing normalization constant $C(\alpha)$ of $\alpha=0$. Relationship between the two LU decompositions remains unclear. 

\begin{proof}
Using the definition
\begin{equation}
\Gamma(\alpha+j)=\Gamma(\alpha+1)(\alpha+1)_{j-1},
\end{equation}
the Leibniz rule gives
\begin{eqnarray}
\Gamma^{(i-1)}(\alpha+j) &=& \left(\Gamma(\alpha+1)(\alpha+1)_{j-1}\right)^{(i-1)} \\
&=& \sum_{k=0}^{i-1}\binom{i-1}{k}\Gamma^{(i-k-1)}(\alpha+1)(\alpha+1)_{j-1}^{(k)} \\
&=& \sum_{k=1}^{m}\binom{i-1}{k-1}\Gamma^{(i-k)}(\alpha+1)(\alpha+1)_{j-1}^{(k-1)} \\
&=& \sum_{k=1}^{m}\left(A_{L}(\alpha)\right)_{i,k}\left(A_{U}(\alpha)\right)_{k,j}=\left(A_{L}(\alpha)A_{U}(\alpha)\right)_{i,j},
\end{eqnarray}
where we have utilized the fact that for $k>i$,
\begin{equation}\label{eq:bi0}
\binom{i}{k}=0.
\end{equation}
This establishes the LU decomposition~(\ref{eq:LU}).

To show~(\ref{eq:LL}), we first prove the identity
\begin{equation}\label{eq:LLi}
\left(A_{L}^{f}(\alpha)A_{L}^{g}(\alpha)\right)_{i,j}=\binom{i-1}{j-1}\left(f(\alpha)g(\alpha)\right)^{(i-j)}
\end{equation}
of a more generalized lower triangular matrix than~(\ref{eq:L}) of an arbitrary function $f(\alpha)$,
\begin{equation}
A_{L}^{f}(\alpha)=\left(\binom{i-1}{k-1}f^{(i-k)}(\alpha)\right)_{i,k=1}^{m},
\end{equation}
as
\begin{eqnarray}
\left(A_{L}^{f}(\alpha)A_{L}^{g}(\alpha)\right)_{i,j} &=& \sum_{k=1}^{m}\left(A_{L}^{f}(\alpha)\right)_{i,k}\left(A_{L}^{g}(\alpha)\right)_{k,j} \\
&=& \sum_{k=1}^{m}\binom{i-1}{k-1}\binom{k-1}{j-1}f^{(i-k)}(\alpha)g^{(k-j)}(\alpha) \\
&=& \binom{i-1}{j-1}\sum_{k=j}^{i}\binom{i-j}{k-j}f^{(i-k)}(\alpha)g^{(k-j)}(\alpha) \label{eq:LL1} \\
&=& \binom{i-1}{j-1}\sum_{k=0}^{i-j}\binom{i-j}{k}f^{(i-j-k)}(\alpha)g^{(k)}(\alpha) \\
&=& \binom{i-1}{j-1}\left(f(\alpha)g(\alpha)\right)^{(i-j)},
\end{eqnarray}
where in obtaining~(\ref{eq:LL1}) we have used~(\ref{eq:bi0}) and the identity
\begin{equation}
\binom{i}{k}\binom{k}{j}=\binom{i}{j}\binom{i-j}{k-j}.
\end{equation}

Taking
\begin{equation}
f(\alpha)=\Gamma(\alpha+1),~~~~~~g(\alpha)=\frac{1}{\Gamma(\alpha+1)}
\end{equation}
in~(\ref{eq:LLi}), one has
\begin{equation}
\left(A_{L}^{f}(\alpha)A_{L}^{g}(\alpha)\right)_{i,j}=\left(A_{L}(\alpha)A_{L}^{g}(\alpha)\right)_{i,j}=\delta_{i,j},
\end{equation}
which gives rise to the inverse of $A_{L}(\alpha)$ as
\begin{equation}\label{eq:inL}
A_{L}^{-1}(\alpha)=A_{L}^{g}(\alpha)=\left(\binom{i-1}{k-1}\left(\frac{1}{\Gamma(\alpha+1)}\right)^{(i-k)}\right)_{i,k=1}^{m}.
\end{equation}
The desired identity~(\ref{eq:LL}) is now established by choosing
\begin{equation}
f(\alpha)=\frac{1}{\Gamma(\alpha+1)},~~~~~~g(\alpha)=\Gamma(\alpha+l+1)
\end{equation}
in~(\ref{eq:LLi}) as
\begin{eqnarray}
\left(A_{L}^{f}(\alpha)A_{L}^{g}(\alpha)\right)_{i,k} &=& \left(A_{L}^{-1}(\alpha)A_{L}(\alpha+l+1)\right)_{i,k} \\
&=& \binom{i-1}{k-1}\left(\frac{\Gamma(\alpha+l+1)}{\Gamma(\alpha+1)}\right)^{(i-k)}.
\end{eqnarray}

To show~(\ref{eq:UU}), we consider the shift operator
\begin{equation}
f(\alpha+l)=\ee^{l\frac{\dd}{\dd\alpha}}f(\alpha)=\sum_{k=0}^{\infty}\frac{l^{k}}{k!}f^{(k)}(\alpha),
\end{equation}
and by choosing
\begin{equation}
f(\alpha)=(\alpha+1)^{(i-1)}_{j-1}
\end{equation}
one has
\begin{eqnarray}
(\alpha+l+1)^{(i-1)}_{j-1} &=& \sum_{k=0}^{\infty}\frac{l^{k}}{k!}(\alpha+1)^{(k+i-1)}_{j-1} \\
&=& \sum_{k=i}^{j}\frac{l^{k-i}}{(k-i)!}(\alpha+1)^{(k-1)}_{j-1} \\
&=& \sum_{k=1}^{m}\frac{l^{k-i}}{(k-i)!}(\alpha+1)^{(k-1)}_{j-1}. \label{eq:UU1}
\end{eqnarray}
In a matrix form, the above result~(\ref{eq:UU1}) is
\begin{equation}
A_{U}(\alpha+l)=\left(\frac{l^{k-i}}{(k-i)!}\right)_{i,k=1}^{m}A_{U}(\alpha),
\end{equation}
where right multiplying $A^{-1}_{U}(\alpha)$ on both sides establishes~(\ref{eq:UU}). This completes the proof of Lemma~\ref{l:de}.
\end{proof}

With the above preparations, we are now ready to prove Proposition~\ref{main:k}.

\subsection{Proof of Proposition~\ref{main:k}}\label{sec:P1}
\begin{proof}
Consider the generating function of spectral moments~(\ref{eq:Rk}) of unconstrained BKM ensemble~(\ref{eq:BKM}),
\begin{eqnarray}
\mathbb{E}\left[\ee^{sR_{l}}\right] &=& \frac{1}{C(\alpha)}\int_{0}^{\infty}\!\!\!\!\cdots\!\int_{0}^{\infty}\det\left(x_{i}^{j-1}\right)\det\left(\ln^{i-1}\!x_{j}\right)\prod_{i=1}^{m}x_{i}^{\alpha} \ee^{-x_{i}+sx_{i}^{l}}\dd x_{i} \\
&=& \frac{1}{C(\alpha)}\det\left(\int_{0}^{\infty}\!x^{\alpha+j-1}\ln^{i-1}\!x~\!\ee^{-x+sx^{l}}\dd x\right)_{i,j=1}^{m} \\
&=& \frac{1}{C(\alpha)}\det\left(\frac{\dd^{i-1}}{\dd\alpha^{i-1}}\int_{0}^{\infty}\!x^{\alpha+j-1}\ee^{-x+sx^{l}}\dd x\right)_{i,j=1}^{m},
\end{eqnarray}
the average spectral moments is generated by using multi-linear property of determinant~\cite{Forrester} as
\begin{eqnarray}
\!\!\!\!\!\!\!\!\mathbb{E}\left[R_{l}\right] &=& \frac{\dd}{\dd s}\mathbb{E}\left[\ee^{sR_{l}}\right]\!\bigg|_{s=0} \\
\!\!\!\!\!\!\!\! &=& \frac{1}{C(\alpha)}\sum_{k=1}^{m}\det
\begin{pmatrix}
\Gamma(\alpha+1) & \cdots & \Gamma(\alpha+k+l) &  \cdots & \Gamma(\alpha+m) \\
\Gamma^{(1)}(\alpha+1) & \cdots & \Gamma^{(1)}(\alpha+k+l) & \cdots & \Gamma^{(1)}(\alpha+m) \\
\vdots & \ddots & \vdots & \ddots & \vdots \\
\Gamma^{(m-1)}(\alpha+1) & \cdots & \Gamma^{(m-1)}(\alpha+k+l) & \cdots & \Gamma^{(m-1)}(\alpha+m) \label{eq:Rg}
\end{pmatrix}.
\end{eqnarray}

The sum over $k$ in~(\ref{eq:Rg}) is a sum of $m$ determinants of matrix $A(\alpha)$ defined in~(\ref{eq:A}) with its $k$-th column replaced by $k$-th column of matrix $A(\alpha+l)$. This fact allows us to rewrite~(\ref{eq:Rg}) as
\begin{eqnarray}
\mathbb{E}\left[R_{l}\right] &=& \frac{1}{C(\alpha)}\frac{\dd}{\dd t}\det\left(A(\alpha)+tA(\alpha+l)\right)\!\big|_{t=0} \\
&=& \frac{1}{C(\alpha)}\det\left(A(\alpha)\right)\tr\left(A^{-1}(\alpha)A(\alpha+l)\right) \\
&=& \tr\left(A^{-1}(\alpha)A(\alpha+l)\right),
\end{eqnarray}
where we have used the fact~(\ref{eq:C1r}) and Jacobi's formula of derivative of a determinant~\cite{Forrester}. We now proceed by utilizing the LU decomposition~(\ref{eq:LU}) and associated identities~(\ref{eq:LL}),~(\ref{eq:UU}) in Lemma~\ref{l:de} as
\begin{eqnarray}
\mathbb{E}\left[R_{l}\right] &=& \tr\left(A_{L}^{-1}(\alpha)A_{L}(\alpha+l)A_{U}(\alpha+l)A_{U}^{-1}(\alpha)\right) \\
&=& \sum_{i=1}^{m}\sum_{k=1}^{m}\left(A_{L}^{-1}(\alpha)A_{L}(\alpha+l)\right)_{i,k}\left(A_{U}(\alpha+l)A_{U}^{-1}(\alpha)\right)_{k,i} \\
&=& \sum_{i=1}^{m}\sum_{k=1}^{m}\binom{i-1}{k-1}\left(\frac{\Gamma(\alpha+l+1)}{\Gamma(\alpha+1)}\right)^{(i-k)}\frac{l^{i-k}}{(i-k)!} \\
&=& \sum_{k=0}^{m-1}\sum_{i=0}^{m-1-k}\binom{k+i}{k}\left(\frac{\Gamma(\alpha+l+1)}{\Gamma(\alpha+1)}\right)^{(i)}\frac{l^{i}}{i!} \\
&=& \sum_{i=0}^{m-1}\binom{m}{i+1}\frac{l^{i}}{i!}\left(\frac{\Gamma(\alpha+l+1)}{\Gamma(\alpha+1)}\right)^{(i)},
\end{eqnarray}
where the last equality is obtained by Chu-Vandermonde identity~\cite[Eq.(A10)]{Wei17}
\begin{equation}
\sum_{k=0}^{m-1-i}\binom{k+i}{i}=\binom{m}{i+1}.
\end{equation}
The completes the proof of Proposition~\ref{main:k}.
\end{proof}

With results obtained thus far, it is now a straightforward task to prove Corollary~\ref{main:m}.

\subsection{Proof of Corollary~\ref{main:m}}\label{sec:C1}
\begin{proof}
The average induced entropy $\mathbb{E}[T]$ is generated from spectral moments~(\ref{eq:k}) as
\begin{eqnarray}
\mathbb{E}[T] &=& \frac{\dd}{\dd l}\mathbb{E}\left[R_{l}\right]\!\big|_{l=1} \\
&=& \sum_{i=0}^{m-1}\binom{m}{i+1}\frac{1}{i!}\left((\alpha+1)\psi_{0}(\alpha+2)\right)^{(i)} + \sum_{i=0}^{m-1}\binom{m}{i+1}\frac{1}{(i-1)!}\left(\alpha+1\right)^{(i)} \\
&=& \sum_{i=0}^{m-1}\binom{m}{i+1}\frac{1}{i!}\left((\alpha+1)\psi_{0}(\alpha+1)\right)^{(i)} + \binom{m}{1} + \binom{m}{2},
\end{eqnarray}
where we have used the fact that 
\begin{equation}
\psi_{0}(\alpha+2)=\psi_{0}(\alpha+1)+\frac{1}{\alpha+1}.
\end{equation}
The calculation proceeds by using the identity, cf.~(\ref{eq:polygamma}),
\begin{equation}
\left((\alpha+1)\psi_{0}(\alpha+1)\right)^{(i)}=(\alpha+1)\psi_{i}(\alpha+1)+i\psi_{i-1}(\alpha+1)
\end{equation}
that, after shifting index $i-1$ to $i$, leads to
\begin{eqnarray}
\mathbb{E}[T] &=& \sum_{i=0}^{m-1}\left(\binom{m}{i+1}\frac{\alpha+1}{i!}+\binom{m}{i+2}\frac{1}{i!}\right)\psi_{i}(\alpha+1) + \frac{1}{2}m(m+1) \\
&=& \sum_{i=0}^{m-1}\binom{m}{i+1}\frac{1}{i!}\left(\alpha+\frac{m+1}{i+2}\right)\psi_{i}(\alpha+1)+\frac{1}{2}m(m+1).
\end{eqnarray}
Finally, inserting the above $\mathbb{E}[T]$ expression into the moment conversion formula~(\ref{eq:ST}) of Lemma~\ref{l:f} completes the proof of Corollary~\ref{main:m}.
\end{proof}

\backmatter
\bmhead{Acknowledgments} The work of Lu Wei was supported by the U.S. National Science Foundation (2306968) and the U.S. Department of Energy (DE-SC0024631). Sohail was supported by Grant 63209 from the John Templeton Foundation. The opinions expressed in this publication are those of the authors and do not necessarily reflect the views of the John Templeton Foundation. \\

\begin{appendices}
\section{Other generic state ensembles}\label{appst}
In this appendix, we introduce two major random state ensembles and compare their average entropy formulas with that of the BKM ensemble.

\subsection{Hilbert-Schmidt ensemble}
Hilbert-Schmidt ensemble is considered as the baseline model of generic quantum states. Randomness of the states comes from the assumption of Gaussian distributed coefficients. The probability density function of Hilbert-Schmidt ensemble and its unconstrained version, up to normalization constants, are given respectively by~\cite{Page93}
\begin{equation}\label{eq:fHS}
f(\lambda)\propto\delta\left(1-\sum_{i=1}^{m}\lambda_{i}\right)\prod_{1\leq i<j\leq m}\left(\lambda_{i}-\lambda_{j}\right)^{2}\prod_{i=1}^{m}\lambda_{i}^{\alpha}
\end{equation}
and
\begin{equation}\label{eq:HS}
g(x)\propto\prod_{1\leq i<j\leq m}\left(x_{i}-x_{j}\right)^{2}\prod_{i=1}^{m}x_{i}^{\alpha}\ee^{-x_{i}}.
\end{equation}
Comparing the above unconstrained ensemble~(\ref{eq:HS}) to BKM unconstrained ensemble~(\ref{eq:BKM}), the only difference is that the interaction
\begin{equation}
\prod_{1\leq i<j\leq m}\left(x_{i}-x_{j}\right)
\end{equation}
in~(\ref{eq:HS}) is replaced by the interaction 
\begin{equation}
\prod_{1\leq i<j\leq m}\left(\ln x_{i}-\ln x_{j}\right)
\end{equation}
in~(\ref{eq:BKM}). This in fact leads to substantial differences of average entropy formulas as discussed below.

The average von Neumann entropy~(\ref{eq:vN}) over the Hilbert-Schmidt ensemble~(\ref{eq:fHS}) was conjectured in the seminal work~\cite{Page93} as
\begin{equation}\label{eq:HSm}
\mathbb{E}\!\left[S\right]=\psi_{0}\left(m^{2}+m\alpha+1\right)-\psi_{0}\left(m+\alpha+1\right)-\frac{m-1}{2(m+\alpha)},
\end{equation}
which was subsequently proved, among other proofs, in~\cite{Foong94,Ruiz95}. It is observed that the above average entropy~(\ref{eq:HSm}) only involves digamma function $\psi_{0}$, whereas the average entropy~(\ref{eq:m}) of BKM ensemble of dimension $m$ involves polygamma functions $\psi_{0},\psi_{1},\dots,\psi_{m}$.

Beyond the mean formula~(\ref{eq:HSm}), higher-order cumulants up to the fourth were reported in~\cite{VPO16,Wei17,Wei20,HWC21}, whereas a simplified procedure to derive cumulants of any order was recently proposed in~\cite{HW25}. Finding structures underlying higher-order cumulants of BKM ensemble remains an interesting open question.

\subsection{Bures-Hall ensemble}
Bures-Hall ensemble is an improved variant of the Hilbert-Schmidt ensemble that satisfies a few additional information-theoretic properties~\cite{Sommers03}. The ensemble is often used as a prior distribution known as Bures prior in reconstructing quantum states from measurements. The probability density function of Bures-Hall ensemble and its unconstrained version, up to normalization constants, are given respectively by~\cite{Kumar19}
\begin{equation}\label{eq:fBH}
f(\lambda)\propto\delta\left(1-\sum_{i=1}^{m}\lambda_{i}\right)\prod_{1\leq i<j\leq m}\frac{\left(\lambda_{i}-\lambda_{j}\right)^{2}}{\lambda_{i}+\lambda_{j}}\prod_{i=1}^{m}\lambda_{i}^{\alpha}
\end{equation}
and
\begin{equation}\label{eq:BH}
g(x)\propto\prod_{1\leq i<j\leq m}\frac{\left(x_{i}-x_{j}\right)^{2}}{x_{i}+x_{j}}\prod_{i=1}^{m}x_{i}^{\alpha}\ee^{-x_{i}}.
\end{equation}

The average von Neumann entropy~(\ref{eq:vN}) over the Bures-Hall ensemble~(\ref{eq:fBH}) was conjectured in~\cite{Kumar19} as
\begin{equation}\label{eq:BHm}
\mathbb{E}\!\left[S\right]=\psi_{0}\left(\frac{m(m+1)}{2}+m\alpha+1\right)-\psi_{0}\left(m+\alpha+1\right),
\end{equation}
which was first proved in~\cite{Wei20BHA} and a recent proof based on spectral moments was found in~\cite{WHW26}. It is seen that the above formula~(\ref{eq:BHm}) shares the same first term 
\begin{equation}
\psi_{0}(\beta+1)=\psi_{0}\left(\frac{m(m+1)}{2}+m\alpha+1\right)
\end{equation}
with that of the BKM ensemble~(\ref{eq:m}). This is due to the fact that trace of the two unconstrained ensembles~(\ref{eq:BH}) and~(\ref{eq:BKM}) follows a gamma density~(\ref{eq:R}) with the same shape parameter $\beta$ in~(\ref{eq:b}), where the first terms in~(\ref{eq:BHm}) and~(\ref{eq:m}) are produced from the same trace statistics~(\ref{eq:RT}). It is also of interest to point out that the average entropy formulas~(\ref{eq:BHm}) and~(\ref{eq:HSm}) share the same term
\begin{equation}
\psi_{0}\left(m+\alpha+1\right),
\end{equation}
although the reason for this remains unclear.

Beyond the mean formula~(\ref{eq:BHm}), higher-order cumulants of variance and skewness formulas were obtained in~\cite{Wei20BH} and~\cite{WHW25}, respectively. However, finding cumulant structures of Bures-Hall ensemble that enable the derivation of cumulant of any order is still an open problem. Finally, we note that besides entanglement entropy~(\ref{eq:vN}), other entanglement measures such as entanglement capacity~\cite{Wei23} and relative entropy~\cite{Wei26} have also been studied for the Hilbert-Schmidt and Bures-Hall ensembles.
\end{appendices}\\




\begin{thebibliography}{99}
\bibitem{Balian86}
Balian, R., Alhassid, Y., Reinhardt, H.: Dissipation in many-body systems: A geometric approach based on information theory. Phys. Rep. {\bf 131}, 1 (1986)

\bibitem{Balian14}
Balian, R.: The entropy-based quantum metric. Entropy {\bf 16}, 3878-88 (2014)

\bibitem{Bianchi21}
Bianchi, E., Hackl, L., Kieburg, M.: The Page curve for fermionic Gaussian states. Phys. Rev. B {\bf 103}, L241118 (2021)

\bibitem{Brandao21}
Brand\~ao, F., Chemissany, W., Hunter-Jones, N., Kueng, R., Preskill, J.: Models of quantum complexity growth. PRX Quantum {\bf 2}, 030316 (2021)

\bibitem{Brychkov}
Brychkov, Y.A.: Handbook of Special Functions: Derivatives, Integrals, Series and Other Formulas. CRC Press, Boca Raton (2008)

\bibitem{Cheliotis18}
Cheliotis, D.: Triangular random matrices and biorthogonal ensembles. Stat. Probab. Lett. {\bf 134}, 36-44 (2018)

\bibitem{Choi23}
Choi, J., et al.: Preparing random states and benchmarking with many-body quantum chaos. Nature {\bf 613}, 468 (2023)

\bibitem{Foong94}
Foong, S.K., Kanno, S.: Proof of Page's conjecture on the average entropy of a subsystem. Phys. Rev. Lett. {\bf 72}, 1148-51 (1994)

\bibitem{Forrester}
Forrester, P.: Log-gases and Random Matrices. Princeton University Press, Princeton (2010)

\bibitem{Forrester16}
Forrester, P., Kieburg, M.: Relating the Bures measure to the Cauchy two-matrix model. Commun. Math. Phys. {\bf 342}, 151-87 (2016)

\bibitem{Forrester17}
Forrester, P., Wang, D.: Muttalib-Borodin ensembles in random matrix theory -- realisations and correlation functions. Electron. J. Probab. {\bf 22}, 1-43 (2017)

\bibitem{HWC21}
Huang, Y., Wei, L., Collaku, B.: Kurtosis of von Neumann entanglement entropy. J. Phys. A: Math. Theor. {\bf 54}, 504003 (2021)

\bibitem{HW22}
Huang, Y., Wei, L.: Second-order statistics of fermionic Gaussian states. J. Phys. A: Math. Theor. {\bf 55}, 105201 (2022)

\bibitem{HW23}
Huang, Y., Wei, L.: Entropy fluctuation formulas of fermionic Gaussian states. Ann. Henri Poincar\'{e} {\bf 24}, 4283-342 (2023)

\bibitem{HW25}
Huang, Y., Wei, L.: Cumulant structures of entanglement entropy. arXiv:2502.05371

\bibitem{Karlin}
Karlin, S.: Total Positivity. Stanford University Press, Stanford (1968)

\bibitem{Miller25}
Miller, H.: Entropy-based random quantum states. arXiv:2511.01988v2

\bibitem{Page93}
Page, D.N.: Average entropy of a subsystem. Phys. Rev. Lett. {\bf 71}, 1291-4 (1993)

\bibitem{Ruiz95}
S\'{a}nchez-Ruiz, J.: Simple proof of Page's conjecture on the average entropy of a subsystem. Phys. Rev. E {\bf 52}, 5653-5 (1995)

\bibitem{Kumar19}
Sarkar, A., Kumar, S.: Bures-Hall ensemble: spectral densities and average entropies. J. Phys. A: Math. Theor. {\bf 52}, 295203 (2019)

\bibitem{Sommers03}
Sommers, H.-J., \.{Z}yczkowski, K.: Random Bures mixed states and the distribution of their purity. J. Phys. A: Math. Gen. {\bf 36}, 10083-100 (2003)

\bibitem{VPO16}
Vivo, P., Pato, M.P., Oshanin, G.: Random pure states: Quantifying bipartite entanglement beyond the linear statistics. Phys. Rev. E {\bf 83}, 052106 (2016)

\bibitem{Wei17}
Wei, L.: Proof of Vivo-Pato-Oshanin's conjecture on the fluctuation of von Neumann entropy. Phys. Rev. E {\bf 96}, 022106 (2017)

\bibitem{Wei20}
Wei, L.: Skewness of von Neumann entanglement entropy. J. Phys. A: Math. Theor. {\bf 53}, 075302 (2020)

\bibitem{Wei20BHA}
Wei, L.: Proof of Sarkar-Kumar's conjectures on average entanglement entropies over the Bures-Hall ensemble. J. Phys. A: Math. Theor. {\bf 53}, 235203 (2020)

\bibitem{Wei20BH}
Wei, L.: Exact variance of von Neumann entanglement entropy over the Bures-Hall measure. Phys. Rev. E {\bf 102}, 062128 (2020)

\bibitem{Wei23}
Wei, L.: Average capacity of quantum entanglement. J. Phys. A: Math. Theor. {\bf 56}, 015302 (2023)

\bibitem{Wei26}
Wei, L.: Average relative entropy of random states. J. Phys. A: Math. Theor. {\bf 59}, 215205 (2026)

\bibitem{WHW25}
Wei, L.-F., Huang, Y., Wei, L.: Skewness of von Neumann entropy over Bures-Hall random states. arXiv:2506.06663

\bibitem{WHW26}
Wei, L.-F., Huang, Y., Wei, L.: Spectral moments of Bures-Hall ensemble and applications to entanglement entropy. arXiv:2602.00955

\end{thebibliography}
\end{document}